\newtheorem{theorem}{Theorem}
\newtheorem{definition}{Definition}
\newtheorem{example}{Example}
\newenvironment{proof}[1][Proof]{\emph{#1.} }{\  \hfill $\square $ \vspace{5 pt}}
\tikzset{myptr/.style={decoration={markings,mark=at position 1 with %
       {\arrow[scale=2,>=stealth]{>}}},postaction={decorate}}}
\newcommand*\samethanks[1][\value{footnote}]{\footnotemark[#1]}
\DeclareFontFamily{T1}{calligra}{}
\DeclareFontShape{T1}{calligra}{m}{n}{<->s*[1.44]callig15}{}
\DeclareMathAlphabet\mathcalligra   {T1}{calligra} {m} {n}
\begin{document}

\title{Quasi-stability notions in two-sided matching models
\thanks{Thanks \ldots. We acknowledge financial support
from UNSL through grants 032016 and 030320, from Consejo Nacional
de Investigaciones Cient\'{\i}ficas y T\'{e}cnicas (CONICET) through grant
PIP 112-200801-00655, and from Agencia Nacional de Promoción Cient\'ifica y Tecnológica through grant PICT 2017-2355.}}


\author{Nadia Gui\~{n}az\'u\samethanks [2]\thanks{Instituto de Matem\'{a}tica Aplicada San Luis (UNSL-CONICET) and Departamento de Matemática, Universidad Nacional de San
Luis, San Luis, Argentina. Emails: \texttt{ncguinazu@unsl.edu.ar} (N. Gui\~{n}azu), \texttt{nmjuarez@unsl.edu.ar} (N. Juarez), \texttt{paneme@unsl.edu.ar} (P. Neme) and \texttt{joviedo12@gmail.com} (J. Oviedo).}    \and Noelia Juarez\samethanks[2] \and Pablo Neme\samethanks[2] \and Jorge Oviedo}

\date{\today}
\maketitle

\begin{abstract}
This paper presents weakened notions of corewise stability and setwise stability for matching markets where agents have substitutable choice functions. We introduce the concepts of worker-quasi-core, firm-quasi-core, and worker-quasi-setwise stability. We also examine their relationship to established notions in the literature, such as worker-quasi and firm-quasi stability in both many-to-one and many-to-many markets.

\bigskip

\noindent \emph{JEL classification:} C78, D47.\bigskip

\noindent \emph{Keywords:} Corewise stability, quasi-stability, setwise stability, substitutable choice function.

\end{abstract}

\section{Introduction}

In this paper, we introduce new stability concepts in two-sided matching markets that extend existing quasi-stability notions to encompass corewise and setwise stability. Specifically, we present the notions of worker-quasi-core, firm-quasi-core, worker-quasi-setwise stable, and firm-quasi-setwise stable, which provide a new framework for understanding stability in both many-to-one and many-to-many matching markets. By broadening quasi-stability beyond pairwise stability, our approach captures more flexible stability requirements suitable for practical applications where agents may have multiple, substitutable partners. This contribution opens new perspectives on stability theory, linking traditional core-based solutions with quasi-stable structures.

Labor markets, a central application in matching theory, are typically analyzed through two main frameworks: the many-to-one model, where workers are matched to a single firm, and the many-to-many model, where workers may have multiple jobs simultaneously. This paper examines both models assuming that each firm operates with a choice function satisfying the substitutability property. Substitutability is fundamental in ensuring pairwise stability in markets with multiple agents, as it guarantees that a firm’s choice of a worker remains consistent even if other workers become unavailable \citep[see][for more details]{kelso1982job}. Workers are similarly assumed to have substitutable choice functions, which, in the many-to-one model, correspond to strict preferences over single-firm sets and the empty set.\footnote{This is usually found in the literature as a preference relation over individual firms. However, since in this paper, the preference relation is related to a choice function, we consider the preference relation to be over single-firm sets.}

In matching markets, stability is often framed in terms of cooperative game theory, where agents form mutually beneficial arrangements based on shared incentives. Traditionally, three central stability concepts have structured the analysis of these markets: pairwise stability, corewise stability, and setwise stability. Pairwise stability, one of the most widely used concepts, ensures that no firm-worker pair, unlinked in the current matching, can improve their outcomes by matching with each other. Building on this, \citet{blair1988lattice} developed an order among matchings that characterizes pairwise stable matchings as having a lattice structure. Blair’s order captures agents' preferences in a hierarchical form: a set of agents is \emph{(Blair)-preferred} to another if the choice function selects it when both sets are available.

Corewise and setwise stability are broader concepts that incorporate coalition-based improvements within a matching. A matching is in the \emph{core} if no coalition can coordinate to form a new arrangement in which each agent within the coalition is matched with other coalition members, each agent's new set of partners is at least as preferred as their current match, and at least one agent strictly prefers their new set. A matching \emph{setwise dominates} another if a coalition of agents---formed only by agents obtaining new partners---can arrange an alternative matching where each agent's new set is weakly preferred to their current match, and at least one agent strictly prefers it. Thus, the set of matching not setwise dominated is called the \emph{setwise stable set}. In many-to-one models, corewise and setwise stability often align with pairwise stability, leading to unified stability results. However, in many-to-many models, these stability notions diverge and introduce distinct layers of complexity, as the involvement of multiple firms and workers creates nuanced coalition interactions \citep[see][for more details]{sotomayor1999three,echenique2004theory}.

Motivated by recent work in quasi-stability, which offers a flexible approach by allowing limited deviations in blocking pairs, we extend this flexibility to corewise and setwise stability. These quasi-stability concepts have gained attention because they relax the traditional pairwise stability notion by allowing blocking pairs under certain constraints, such as preventing firms from dismissing existing employees. This relaxation allows us to address cases where pairwise stability might be unattainable or unrealistic due to market constraints or agents’ preferences. For instance, \citet{blum1997vacancy}, \citet{wu2018lattice}, \cite{bonifacio2022lattice}, \cite{bonifacio2024envyfreelattice}, \cite{bonifacio2024lattice}, and \cite{yang2024existence}  explore quasi-stability notions in various market settings, showing how quasi-stable matchings can retain desirable properties, such as lattice structure.

Building on this foundation, we extend quasi-stability concepts to define the worker-quasi-core, firm-quasi-core, worker-quasi-setwise stable, and firm-quasi-setwise stable matchings. A matching belongs to the \emph{worker-quasi-core} if, in cases where a coalition dominates the current matching, workers in the coalition retain their current firms and potentially add new ones in the dominant matching. In many-to-one models, this definition simplifies to require that only those workers in a deviating coalition who switch partners from the dominated to the dominant matching are unemployed in the dominated matching. Analogously, we define the \emph{firm-quasi-core}, which imposes that firms in a dominating coalition retain their current employees, adding only new hires in the dominant matching. These definitions, which parallel the firm- and worker-based quasi-stability in pairwise stability, adapt corewise stability to a framework that accommodates more flexible market conditions.

We extend these ideas to setwise stability by introducing the worker-quasi-setwise stable and firm-quasi-setwise stable matchings. A matching is \emph{worker-quasi-setwise stable} if it is individually rational and, when another matching setwise dominates it through a coalition, the workers in the coalition retain their existing firms while incorporating new ones in the dominant matching. Both concepts are particularly relevant for many-to-many models, where setwise and quasi-stability provide meaningful insights into complex market interactions. Due to the symmetry of the market, the \emph{firm-quasi-setwise stable} notion is defined analogously.

Our results provide new characterizations within the quasi-stable framework. Specifically, we establish that in many-to-one models, individually rational matchings in the worker-quasi-core are precisely worker-quasi-stable matchings. We also show that individually rational matchings in the firm-quasi-core are firm-quasi-stable matchings. However, as is often the case in the matching literature, these characterizations do not extend directly to the many-to-many model due to the increased complexity of multiple employment relationships. Nevertheless, we establish that worker-quasi-stable matchings remain within the worker-quasi-core, and this inclusion can be strict depending on the market's characteristics. By symmetry, this result also holds for firm-quasi-stable matchings within the firm-quasi-core.

Finally, we characterize the (firm) worker-quasi-setwise stable set using (firm) worker-quasi-stable matchings. Moreover, we show that a matching that is both worker-quasi-stable and firm-quasi-stable is either pairwise stable or is dominated by another matching via some coalition, highlighting the nuanced relationships between different quasi-stability notions and traditional stability concepts. 

The paper is structured as follows. Section \ref{seccion de preliminares} provides preliminary concepts. Section \ref{sección many-to-one } introduces two weakened forms of corewise stability for many-to-one models: the worker-quasi-core and firm-quasi-core, each with characterizations in terms of quasi-stable matchings. Section \ref{sección many-to-many} extends these notions to many-to-many models, showing that individually rational matchings in the worker-quasi-core do not fully characterize the worker-quasi-stable matchings, i.e. only one inclusion holds. We also introduce the worker-quasi-setwise stable set and its characterization. Finally, Section \ref{Concluding Remarks} offers concluding remarks.

\section{Preliminaries}\label{seccion de preliminares}In this paper we consider both a many-to-many and many-to-one matching markets where there are two disjoint sets of agents: the set of firms $F$ and the set of workers $W$. Each firm $f\in F$ has a choice function $C_f$ over the set of all subsets of $W$ that satisfies \textbf{substitutability}, i.e.,   for $T'\subseteq T \subseteq W$, we have $C_f(T)\cap T' \subseteq C_f(T')$.\footnote{Substitutability is equivalent to the following: for each $w\in W$ and each $T\subseteq W$ such that $w\in T$, $w\in C_f\left(T\right)$ implies that $w\in C_f\left(T'\cup \{w\}\right)$ for each $T' \subseteq T.$} In addition, we assume that $C_f$ satisfies $C_f(T')=C_f(T)$ whenever $C_f(T)\subseteq T' \subseteq T \subseteq W.$ This property is known in the literature as \textbf{consistency}. If $C_f$  satisfies substitutability and consistency, then it also satisfies \begin{equation}\label{propiedad de choice}
C_f\left(T\cup T'\right)=C_f\left(C_f\left(T\right)\cup T'\right)
\end{equation} for each pair of subsets $T$ and $T'$ of $W$.\footnote{This property is known in the literature as \emph{path independence} \citep[see][]{alkan2002class}.}


For the many-to-many case, each worker $w\in W$ has a choice function $C_w$ over the set of all subsets of $F$ that satisfies substitutability, and consistency (and, therefore, path independence). On the other hand, in the many-to-one case, each worker \( w \in W \) has a strict preference relation \( >_w \) over single-firm sets and the empty set.
 Since the sets $F$ and $W$ are kept fixed throughout the paper, we denote a many-to-many matching market by  $(C_F,C_W)$, and a many-to-one matching market by $(C_F,>_W),$ where $>_W$ is the preference profile for workers in the many-to-one market, and $C_F$ and $C_W$ are the profiles of choice functions for all firms and all workers, respectively. Given a firm $f$, it is well established that a choice function $C_f$ may induce a preference relation $>_f$, which is not necessarily unique \citep[see][for more details]{martinez2008invariance,martinez2012invariance}. In contrast, a preference relation $>_f$ induces only one choice function $C_f$ \citep[see][for more details]{alkan2002class,echenique2004theory}. 

In the remainder of this section, we provide several standard definitions for a many-to-many matching market. These definitions can be readily adapted to the many-to-one case with minor modifications.

\begin{definition}
A \textbf{matching} $\mu$ is a function from the set $F\cup W$ into $2^{F\cup W}$ such that for each $w\in W$ and for each $f\in F$:
\begin{enumerate}[(i)]
\item $\mu(w)\subseteq F$,
\item $\mu(f)\subseteq W$,
\item $w\in \mu(f)$ if and only if $f\in \mu(w)$.
\end{enumerate}
\end{definition}

Let $\boldsymbol{\widetilde{\mathcal{M}}}$ and $\boldsymbol{\mathcal{M}}$ denote the set of matchings for markets  $(C_F,C_W)$ and $(C_F,>_W)$, respectively.

Note that for a many-to-one matching, we need to require in (i) that  $\mu(w)\subseteq F$ and $|\mu(w)|\leq 1$.
Agent $a\in F\cup W$ is \textbf{matched} if $\mu(a) \neq \emptyset$, otherwise she is \textbf{unmatched}. 
A matching $\mu$ is \textbf{blocked by agent $\boldsymbol{a}$} if $\mu(a)\neq C_a(\mu(a))$.\footnote{In the many-to-one case, this condition for a worker $w$ is equivalent to stating that $\emptyset >_w \mu(w).$} A matching is  \textbf{individually rational} if it is not blocked by any individual agent. Let $\boldsymbol{\widetilde{\mathcal{I}}}$ and $\boldsymbol{\mathcal{I}}$ denote the set of individually rational matchings for markets $(C_F,C_W)$ and $(C_F,>_W)$, respectively.  A matching $\mu$ is \textbf{blocked by a firm-worker pair $\boldsymbol{(f,w)}$} if $w \notin \mu( f ), w \in C_f(\mu( f )\cup \{w\}),$ and $f \in C_w(\mu( w )\cup \{f\})$.\footnote{In the many-to-one case, this condition for a worker $w$ is equivalent to stating that $\{f\} >_w \mu(w).$ } A matching $\mu$ is \textbf{pairwise stable} if it is not blocked by any individual agent or any firm-worker pair. Let $\boldsymbol{\widetilde{\mathcal{S}}}$ and $\boldsymbol{\mathcal{S}}$ denote the set of stable matchings for markets $(C_F,C_W)$ and $(C_F,>_W)$, respectively. 

 Given a firm $f$, \cite{blair1988lattice} defines a partial order for $f$  over  subsets of workers  as follows: given  firm $f$'s choice function $C_f$ and two subsets of workers  $T$ and  $T'$, we write $\boldsymbol{T \succeq_f T'}$ whenever $T=C_f(T \cup T')$, and $\boldsymbol{T \succ_f T'}$ whenever $T \succeq_f T'$ and $T \neq T'$. Given a worker $w$ and $C_w$, we analogously define $\succeq_w$ and $\succ_w$. Given Blair's partial order and two  matchings $\mu$ and  $\mu'$,  we write  $\boldsymbol{\mu \succeq_F \mu'}$ whenever $\mu(f) \succeq_f \mu'(f)$ for each $f\in F$, and we write $\boldsymbol{\mu \succ_F \mu'}$ if, in addition, $\mu \neq \mu'$.\footnote{We call $\succeq_F$ the unanimous Blair order for the firms.}  Similarly, we define $\succeq_W$ and $\succ_W$. In the case of a many-to-one market, we denote $\geq_W$ as the unanimous order of all workers. \footnote{Note that in this case, the unanimous Blair order $\succeq_W$ coincides with the unanimous order $\geq_W$.}

In addition to pairwise stability, another solution concept considered in this paper is the ``core''. Before formally defining the core, we introduce the notion of dominance among matchings through a coalition of agents. Given a subset of agents $S$, let $\mu(S)$ be the set of partners of each agent in $S$, i.e., $\mu(S)=\{\mu(a): a\in S\}.$
\begin{definition}\label{defino dominancia}
    Let $\mu,\mu'$ be two different matchings, and let $S\subseteq F\cup W$ be a non-empty coalition. We say that  \textbf{$\boldsymbol{\mu'$ dominates $\mu}$ via }  $\boldsymbol{S}$ if  $\mu'(S)\subseteq S$,  $\mu'(a)\succeq_a\mu(a)$  for each $a\in S$, and there is $a'\in S$ such that $\mu'(a')\succ_{a'}\mu(a')$.

\end{definition}

We define the  \textbf{core} as the set of undominated matchings.\footnote{The notion of corewise stability that we consider in this paper is referred to as ``Blair corewise stability'', as used, for instance, in \cite{echenique2004theory}.} Note that for the many-to-one market, the condition  $\mu'(w)\succeq_w\mu(w)$ is equivalent to $\mu'(w)\geq_w\mu(w)$ for $w\in S$.  Thus, we denote the core by $\boldsymbol{\widetilde{\mathcal{C}}}$ for market $(C_F,C_W)$ and  by $\boldsymbol{\mathcal{C}}$ for market $(C_F,>_W)$. 
 Furthermore, in many-to-one markets, the core coincides with the pairwise stable set and, therefore, is non-empty.

\section{Quasi-stability notions in a many-to-one model}\label{sección many-to-one }
In this section, we present two weakening of the corewise stability notion for a many-to-one market. In Subsection \ref{subseccion worker quasi en M-1}, we introduce the notion of ``worker-quasi-core'' and establish its relationship with other well-known solution concepts in the literature. In Subsection \ref{subseccion firma quasi en M-1}, we weaken the corewise stability notion in the direction of firms by presenting the ``firm-quasi-corewise stability'' notion and examine its relationships with other established solution concepts.

\subsection{Worker-quasi-corewise stability}\label{subseccion worker quasi en M-1}

In this subsection, we begin by presenting a weakening of the corewise stability notion, which allows some domination while imposing restrictions on some workers involved in the formed coalition.

\begin{definition}\label{defino worker-quasi-core}
    A matching $\mu$ is in the \textbf{worker-quasi-core} if there are a matching $\mu'$ and a coalition $S$ such that $\mu'$ dominates $\mu$ via $S$, then  $\mu(w)=\emptyset$ for each $w\in S $ that satisfies  $\mu(w)\neq\mu'(w).$
\end{definition}
Let $\boldsymbol{ \mathcal{C^{QW}} }$ denote the worker-quasi-core for market $(C_F,>_w)$. A matching is in the worker-quasi-core if it is either undominated or in the case that is dominated by another matching via some coalition, the workers in that coalition that deviates to improve their situation in the new matching must be unemployed in the dominated matching. 

Notice that, by Definition \ref{defino worker-quasi-core}, the worker-quasi-corewise stability notion is a weakening of the corewise stability notion, i.e. the worker-quasi-core contains the core. As mentioned in the previous section, the core is non-empty, and thus $\mathcal{{C^{QW}}}$ is also non-empty. Moreover, in the many-to-one model, the core is always individually rational \citep[see][for more details]{echenique2004theory}. However, the following example shows that the worker-quasi-core may not be individually rational even in a one-to-one model.
\begin{example}
     Let $(C_F,>_w)$ be a market where   $F = \{f\}$ and $W = \{w\}$. The choice function for $f$  fulfills that $C_f(\{w\})=\emptyset$. As previously mentioned, this choice function induces a preference for firm $f$. Thus, consider the following preference profile: $$>_{f}:  \emptyset,\{w\},  \text{~~~ and~~~ } >_{w}: \{f\},\emptyset .$$ 
  Consider the following matching:
 $$
\mu=\begin{pmatrix}
f  \\
w\\
\end{pmatrix}
.~$$
It is easy to see that $\mu\notin \mathcal{I}$. Note that the unique $\mu'$ that dominates $\mu$ via a coalition $S$ is such that $\mu'(f)=\emptyset$ with $S=\{f\}$. Since there are no workers in $S$, the condition that $\mu(w)=\emptyset$ for each $w\in S$  such that $\mu(w)\neq\mu'(w)$ is trivially satisfied.    Therefore,
$\mu\in \mathcal{C^{QW}} $.  \hfill $\Diamond$
\end{example}

We now focus on studying the relationship between worker-quasi-core matchings and worker-quasi-stable matchings. Worker-quasi-stability is a weakening of pairwise stability that allows blocking pairs involving a firm and an unemployed worker. Formally,

\begin{definition}\label{def worker-quasi many-to-one}
    A matching $\mu$ is \textbf{worker-quasi-stable} if it is individually rational and, whenever $(f,w)$ blocks $\mu$, we have $\mu(w)=\emptyset$. 
\end{definition}

Let $\boldsymbol{\mathcal{QW}}$ denote the set of all worker-quasi-stable matchings for market $(C_F,>_{W}).$  Notice that the set $\mathcal{QW}$ is always non-empty since the empty matching in which each agent is unmatched belongs to this set.



Let us observe that worker-quasi-stable matchings, although a weakening of stable matchings remain individually rational.\footnote{Recall that the set of worker-quasi-stable matchings contains the pairwise stable set.} Therefore, to study the relationship between matchings in the worker-quasi-core and worker-quasi-stable matchings, we must restrict our analysis to those matchings in the worker-quasi-core that satisfy individual rationality. Thus, for a many-to-one setting, we present a characterization of the set of individually rational matchings in the worker-quasi-core through the set of worker-quasi-stable matchings.

\begin{theorem}\label{teorema characterization de quasi stable en m-1}
 Let $(C_F,>_w)$ be a many-to-one market, then     $\mathcal{I}\cap \mathcal{C^{QW}}=\mathcal{QW}.$
\end{theorem}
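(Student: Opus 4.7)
The plan is to prove the two inclusions $\mathcal{QW} \subseteq \mathcal{I} \cap \mathcal{C^{QW}}$ and $\mathcal{I} \cap \mathcal{C^{QW}} \subseteq \mathcal{QW}$ separately.

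For the first inclusion, I would take $\mu \in \mathcal{QW}$; individual rationality is part of the definition of $\mathcal{QW}$, so only $\mu \in \mathcal{C^{QW}}$ requires argument. Suppose a matching $\mu'$ dominates $\mu$ via a coalition $S$ and fix $w \in S$ with $\mu'(w) \neq \mu(w)$. Since in a many-to-one market $\succeq_w$ is the reflexive closure of the strict preference $>_w$, dominance gives $\mu'(w) >_w \mu(w)$; individual rationality of $\mu$ rules out $\mu'(w) = \emptyset$ unless $\mu(w) = \emptyset$ (in which case the required conclusion already holds), so we may write $\mu'(w) = \{f\}$ for some firm $f \in S$. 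I would then verify that $(f, w)$ blocks $\mu$: the worker-side condition $f \in C_w(\mu(w) \cup \{f\})$ is immediate from $\{f\} >_w \mu(w)$, while the firm-side condition $w \in C_f(\mu(f) \cup \{w\})$ follows from $w \in \mu'(f) = C_f(\mu'(f) \cup \mu(f))$ (which holds because $\mu'(f) \succeq_f \mu(f)$), combined with substitutability applied to $\mu(f) \cup \{w\} \subseteq \mu'(f) \cup \mu(f)$. Worker-quasi-stability of $\mu$ then forces $\mu(w) = \emptyset$, exactly what Definition \ref{defino worker-quasi-core} demands.

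For the second inclusion, I would take $\mu \in \mathcal{I} \cap \mathcal{C^{QW}}$ together with a blocking pair $(f, w)$ of $\mu$, and argue by contradiction that $\mu(w) = \emptyset$. Assume $\mu(w) = \{\bar f\}$; the blocking condition gives $\bar f \neq f$. Put $A := C_f(\mu(f) \cup \{w\})$, which contains $w$ by the blocking condition. I would construct a witness matching $\mu'$ that differs from $\mu$ only at a handful of agents: $\mu'(f) = A$, $\mu'(w) = \{f\}$, $\mu'(\bar f) = \mu(\bar f) \setminus \{w\}$, and $\mu'(w'') = \emptyset$ for every $w'' \in \mu(f) \setminus A$; all remaining agents retain their $\mu$-partners. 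A short case check on the reciprocity condition $w'' \in \mu'(f'') \iff f'' \in \mu'(w'')$ confirms that $\mu'$ is a matching. Taking $S := \{f\} \cup A$, I would verify that $\mu'$ dominates $\mu$ via $S$: the inclusion $\mu'(S) \subseteq S$ follows from $\mu'(w') = \{f\}$ for each $w' \in A$; the strict improvement $\mu'(f) = A \succ_f \mu(f)$ follows from consistency, since $C_f(A \cup \mu(f)) = C_f(\mu(f) \cup \{w\}) = A$ and $w \in A \setminus \mu(f)$; the workers in $A \cap \mu(f)$ keep the same partner $\{f\}$; and $w$ strictly benefits because $\{f\} >_w \{\bar f\}$ by the blocking condition. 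The worker-quasi-core property applied to $w \in S$ with $\mu'(w) \neq \mu(w)$ then forces $\mu(w) = \emptyset$, contradicting $\mu(w) = \{\bar f\}$.

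The main technical obstacle is the construction in the second inclusion: one needs a dominating matching $\mu'$ whose coalition is closed under $\mu'$-partnership while delivering a strict Blair improvement at $f$. The displaced workers in $\mu(f) \setminus A$ and the release of $w$ by her former firm $\bar f$ must be handled simultaneously without breaking the reciprocity condition, and the precise tools that make the Blair improvement at $f$ work are substitutability and consistency of $C_f$ (equivalently, path independence).
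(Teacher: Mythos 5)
Your proposal is correct and follows essentially the same route as the paper's proof: one direction extracts a blocking pair from a domination via substitutability, and the other builds a dominating coalition $\{f\}\cup C_f(\mu(f)\cup\{w\})$ from a blocking pair using path independence/consistency for the Blair improvement at $f$. The only difference is presentational—you argue the first inclusion directly rather than by contrapositive, and you spell out the witness matching $\mu'$ on all agents (including $\bar f$ and the displaced workers) where the paper leaves this implicit.
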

\begin{proof} We will prove the double inclusion in two cases.
\begin{itemize} 
    \item[$\boldsymbol{\subseteq )}$] 
     Let $\mu\in \mathcal{I}\cap\mathcal{C^{QW}}$. If $\mu\in \mathcal{S}$, $\mu\in \mathcal{QW}$ and we are done. Otherwise, assume that $(f,w')$ is a blocking pair of $\mu$. We consider $S=\{f,C_f(\mu(f)\cup\{w'\})\}$ and let $\mu'$ be a matching such that $\mu'(f)=C_f(\mu(f)\cup \{w'\})$. We claim that $\mu'$ dominates $\mu$ via $S$. To see this,  we first show that $\mu'(S)\subseteq S$. Let $w\in S$. Thus, $w \in C_f(\mu(f)\cup \{w'\})$. By definition of $\mu'$, we have that $w\in\mu'(f).$ Since $\mu'$ is a matching, this implies that $\mu'(w)=\{f\}$. Since $f\in S$, we have that $\mu'(w)\in S$. Take $f\in S.$ By definition of $\mu'$,  $\mu'(f)=C_f(\mu(f)\cup\{w'\})$. Hence,  $\mu'(S)\subseteq S$. Now, we prove that $\mu'(w)\geq_w \mu(w)$ for each $w\in S$. Since $(f,w')$ is a blocking pair of $\mu$, we have that $w'\in C_f(\mu(f)\cup\{w'\})$ and $\{f\}>_{w'}\mu(w')$, and by definition of $\mu'$, $\mu'(w')=\{f\}$.  On the other hand, for $w\neq w'$ we have that $\mu'(w)=\mu(w)=\{f\}$. So, $\mu'(w)\geq_w \mu(w)$ for each $w\in S$. Lastly, we prove $\mu'(f)\succeq_f \mu(f)$ for $f\in S$. To see this,  by Path Independence and definition of $\mu'$, we have  $C_f(\mu(f)\cup\mu'(f))=C_f(\mu(f)\cup C_f(\mu(f)\cup \{w'\}))=C_f(\mu(f)\cup\{w'\})=\mu'(f).$ This implies that $\mu'(f)\succeq_f \mu(f)$. Hence, $\mu'$ dominates $\mu$ via $S$ proving the claim. Furthermore, since  $\mu\in \mathcal{I}\cap\mathcal{C^{QW}}$ and $\mu'(w')\neq \mu(w')$, we have that $\mu(w')=\emptyset.$ Therefore $\mu\in \mathcal{QW}.$
  \item [$\boldsymbol{\supseteq )}$] 
  Assume that $\mu\in \mathcal{I}\setminus \mathcal{C^{QW}}$ and we will prove that $\mu \notin \mathcal{QW}.$  Then, there are a matching $\mu'$, a coalition $S$, and a worker $w\in S$ satisfying that $\mu'$ dominates $\mu$ via $S$, $\mu'(w)\neq\mu(w)$, but $\mu(w)\neq\emptyset$. Since $\mu\in \mathcal{I}$  and $\mu'$ dominates $\mu$ via $S$, there is $f\in S$ such that $\{f\}=\mu'(w)$. Given that $\mu'(w)\neq \mu(w)$, we have that \begin{equation}\label{ecu 1 teorema 1}
       \{f\}=\mu'(w)>_w\mu(w).
    \end{equation} Since  $\mu'(f)\succeq_f\mu(f)$, $w\in C_f(\mu(f)\cup\mu'(f))$ and, the fact that $w\in \mu'(f)\setminus \mu(f)$ implies, by substitutability, $w\in C_f(\mu(f)\cup\{w\}).$ This last fact together with \eqref{ecu 1 teorema 1} imply that $(f,w)$ is a blocking pair of $\mu$ with  $\mu(w)\neq\emptyset.$ Therefore $\mu \notin \mathcal{QW}. $
\end{itemize}
By the double inclusion we have $\mathcal{I}\cap \mathcal{C^{QW}}=\mathcal{QW}.$
\end{proof}

The relation among the individuality rational matching set, the worker-quasi-core set, the worker-quasi-stable matching set, and the pairwise stable matching set is depicted in Figure \ref{Diagrama many to one worker quasi}.

\begin{figure}[ht!]
    \centering
\begin{tikzpicture}[scale=0.8]
    \fill[red, opacity=0.2] (2, 0.5) circle (4.5cm);
    \draw (2, 0.5) circle (4.5cm);
    \node at (4.8, 1) {$\mathcal{I}$};

     \fill[red, opacity=0.2] (-1, 0) circle (5cm);
    \draw (-1, 0) circle (5cm);
    \node at (-4, 0) {$\mathcal{C^{QW}}$};
    \node at (0.8, 3) {$\mathcal{QW}$};

   \fill[red, opacity=0.2] (1, 0) circle (2cm);
    \draw (1, 0) circle (2cm);
    \node at (1,0) {$\mathcal{C} = \mathcal{S} $};
\end{tikzpicture}
 \caption{Relation among worker-quasi stability notions in many-to-one markets}
    \label{Diagrama many to one worker quasi}
\end{figure}
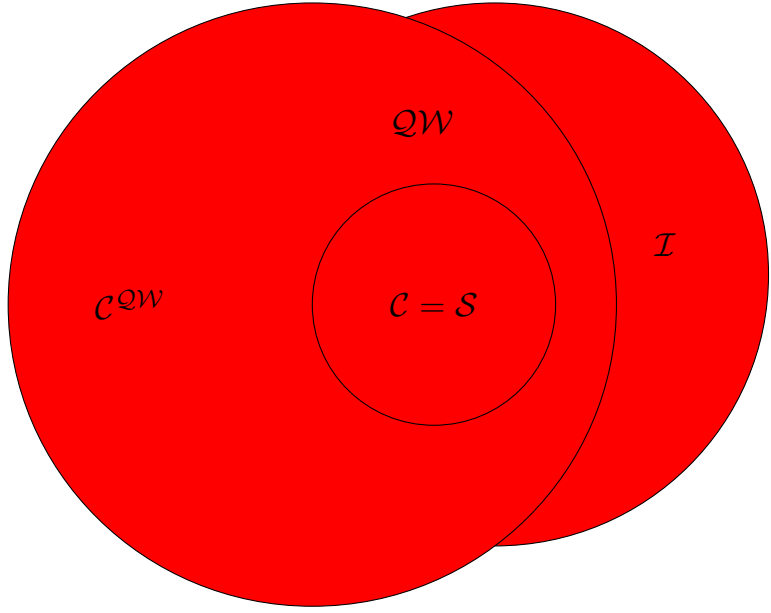


\subsection{Firm-quasi-corewise stability}\label{subseccion firma quasi en M-1}

In this subsection, we present another weakening of the corewise stability notion, in which we
allow some domination while imposing restrictions on the firms involved in the coalition that is formed.

\begin{definition}
    A matching $\mu$ is in the \textbf{firm-quasi-core} if there are a matching $\mu'$ and a coalition $S$ such that $\mu'$ dominates $\mu$ via $S$, it must satisfy that $\mu(f)\subseteq\mu'(f)$ for each $f\in S$. 
\end{definition}

Let $\boldsymbol{ \mathcal{C^{QF}} }$ denote the firm-quasi-core for market $(C_F,>_w)$. A matching is in the firm-quasi-core if it is either undominated or in the case that is dominated by another matching via some coalition, the firms in that coalition that deviate to improve their situation in the new matching only incorporate new workers without firing anyone. 
Notice that, by definition, the firm-quasi-core contains the core.  Since $\mathcal{{C}}\subseteq \mathcal{C^{QF}}$ and the fact that in many-to-one settings $\mathcal{C}\neq \emptyset$, we have that  $\mathcal{{C^{QF}}}\neq \emptyset$. Recall that, in the many-to-one model, the core is always individually rational, however, the following example shows that the firm-quasi-core may not be individually rational even in a one-to-one market.

\begin{example}
     Let $(C_F,>_w)$ be a market where   $F = \{f\}$ and $W = \{w\}$. The choice function for $f$ satisfy $C_f(\{w\})=\{w\}.$ As previously mentioned, these choice functions induce a preference relation for firm $f$, and a preference for worker $w$.  Thus, we obtain the following preference profile: $$>_{f}: \{w\}, \emptyset  \text{~~~ and~~~ } >_{w}: \emptyset,\{f\} .$$ 
   Consider the following matching:
 $$
\mu=\begin{pmatrix}
f  \\
w\\
\end{pmatrix}
.$$
It is easy to see that $\mu\notin \mathcal{I}$. Note that the unique $\mu'$ that dominates $\mu$ via a coalition $S$ is such that $\mu'(w)=\emptyset$ with $S=\{w\}$. Since there are no firms in $S$, the condition that $\mu(f)\subseteq \mu'(f)$ for each $f\in S$ is trivially satisfied.    Therefore,
$\mu\in \mathcal{C^{QF}} $.  \hfill $\Diamond$
\end{example}

To introduce the second characterization of this section, we need to define the concept of firm-quasi-stable matchings. The notion of firm-quasi-stability was first introduced for many-to-one markets in \cite{cantala2004restabilizing}.

Given $\mu \in \mathcal{M}$ and $w \in W$, let $$W_f^\mu=\{w \in W : \{f\} >_w \mu(w) \}.$$

\begin{definition}\label{def firm-quasi many-to-one}
   A matching $\mu$ is \textbf{firm-quasi-stable} if it is individually rational and, for each $f \in F$ and each $T \subseteq W_f^\mu$, we have $$\mu(f) \subseteq C_f(\mu(f)\cup T).$$ 
\end{definition}

Note that, in a firm-quasi-stable matching no firm $f$ will fire any worker assigned under $\mu$ in order to hire some of the workers who wish to be employed by firm $f$.
Denote by $\boldsymbol{\mathcal{QF}}$ the set of all firm-quasi-stable matchings for market $(C_F,>_w)$. Notice that $\mathcal{QF}$ is non-empty, since the empty matching, in which every agent is unmatched, belongs to $\mathcal{QF}$.\footnote{Recall that the set of firm-quasi-stable matchings contains the pairwise stable set.}

    




    

Next, we present our second characterization result of the many-to-one setting. We characterize of the set of individually rational matchings in the firm-quasi-core through the set of the firm-quasi-stable matchings.

\begin{theorem}
 Let $(C_F,>_w)$ be a many-to-one market, then    $\mathcal{I}\cap \mathcal{C}^{QF}=\mathcal{QF}.$
\end{theorem}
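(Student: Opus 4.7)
The plan is to mirror the structure of Theorem \ref{teorema characterization de quasi stable en m-1} and establish the two inclusions separately, in both cases reasoning by contrapositive against the defining property on the other side.

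For the inclusion $\mathcal{I} \cap \mathcal{C}^{QF} \subseteq \mathcal{QF}$, take $\mu \in \mathcal{I} \cap \mathcal{C}^{QF}$. Individual rationality is inherited. To show firm-quasi-stability, suppose for contradiction that there exist $f \in F$ and $T \subseteq W_f^\mu$ with $\mu(f) \not\subseteq C_f(\mu(f) \cup T)$. Set $C := C_f(\mu(f) \cup T)$ and form the coalition $S := \{f\} \cup C$. Define $\mu'$ by $\mu'(f) = C$ and $\mu'(w) = \{f\}$ for each $w \in C$; for the released workers in $\mu(f) \setminus C$ set $\mu'(w) = \emptyset$, and elsewhere define $\mu'$ by copying $\mu$ while removing any worker in $C$ from its former firm, so that $\mu'$ is a valid many-to-one matching. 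I would then verify the three dominance requirements: $\mu'(S) \subseteq S$ is immediate from the construction; $\mu'(f) \succeq_f \mu(f)$ follows from path independence (equation (1)), since $C_f(\mu(f) \cup \mu'(f)) = C_f(\mu(f) \cup C_f(\mu(f) \cup T)) = C_f(\mu(f) \cup T) = \mu'(f)$; and $\mu'(w) \geq_w \mu(w)$ for every $w \in C$, using the crucial observation that $T \cap \mu(f) = \emptyset$ (because $w \in \mu(f)$ forces $\mu(w) = \{f\}$ by the many-to-one hypothesis, while $w \in T \subseteq W_f^\mu$ requires $\{f\} >_w \mu(w)$ strictly). Strict improvement for $f$ is guaranteed by $\mu(f) \ne \mu'(f)$. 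Applying the firm-quasi-core property to this dominating $\mu'$ forces $\mu(f) \subseteq \mu'(f) = C$, contradicting the choice of $T$.

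For the inclusion $\mathcal{QF} \subseteq \mathcal{I} \cap \mathcal{C}^{QF}$, take $\mu \in \mathcal{QF}$. Individual rationality is part of the definition. Suppose for contradiction that $\mu \notin \mathcal{C}^{QF}$, so there exist $\mu'$, a coalition $S$ and a firm $f \in S$ such that $\mu'$ dominates $\mu$ via $S$ but $\mu(f) \not\subseteq \mu'(f)$. Set $T := \mu'(f) \setminus \mu(f)$. Each $w \in T$ lies in $\mu'(f) \subseteq S$, so $\mu'(w) = \{f\} \geq_w \mu(w)$; since $w \notin \mu(f)$ forces $\mu(w) \ne \{f\}$ and preferences are strict, in fact $\{f\} >_w \mu(w)$, proving $T \subseteq W_f^\mu$. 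Because $\mu(f) \cup T = \mu(f) \cup \mu'(f)$ and $\mu'(f) \succeq_f \mu(f)$ rewrites as $C_f(\mu(f) \cup \mu'(f)) = \mu'(f)$, we obtain $C_f(\mu(f) \cup T) = \mu'(f)$. The firm-quasi-stability of $\mu$ then gives $\mu(f) \subseteq C_f(\mu(f) \cup T) = \mu'(f)$, contradicting the choice of $f$.

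The main obstacle I expect is the $\subseteq$ direction: one needs to manufacture a matching $\mu'$ that actually witnesses a domination, which requires choosing $S$ so that the bilateral constraints of a matching can be satisfied while keeping $\mu'(S) \subseteq S$. The key enabling fact throughout is that in the many-to-one setting a current employee of $f$ cannot strictly prefer $f$, so $T$ is disjoint from $\mu(f)$ and both the dominance verification and the identification $C_f(\mu(f) \cup T) = \mu'(f)$ via path independence go through cleanly.
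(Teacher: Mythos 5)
Your proof is correct and follows essentially the same strategy as the paper's: both directions work by double inclusion, constructing a dominating matching with $\mu'(f)=C_f(\mu(f)\cup T)$ via path independence in one direction and extracting a set $T\subseteq W_f^\mu$ from a dominating coalition in the other. The only (minor, and if anything beneficial) differences are that you argue the forward direction from an arbitrary violating pair $(f,T)$ rather than from a single blocking pair, and in the reverse direction you take $T=\mu'(f)\setminus\mu(f)$ so that $C_f(\mu(f)\cup T)=\mu'(f)$ holds exactly, whereas the paper uses $T=W_f^\mu$ and an extra appeal to substitutability.
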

\begin{proof} We will prove the double inclusion in two cases.
\begin{itemize} 
    \item[$\boldsymbol{\subseteq )}$]  Let $\mu\in \mathcal{I}\cap \mathcal{C}^{QF}$.  If $\mu\in \mathcal{S}$,  $\mu\in \mathcal{QF}$ and we are done. Otherwise, there is a blocking pair $(f,w)$ of $\mu$. Thus, there is $T\neq \emptyset$ such that $T\subseteq W^\mu_f.$ We consider $S=\{f,C_f(\mu(f)\cup T)\}$ and let $\mu'$ be a matching such that 
$\mu'(f)=C_f(\mu(f)\cup T).$ We claim that $\mu'$ dominates $\mu$ via $S.$ To see this, we first show that $\mu'(S)\subseteq S.$ Note that $\mu'(f)=C_f(\mu(f)\cup T)$ implies that  $\mu'(f)\subseteq S.$ Let $w\in S$. Thus, $w\in C_f(\mu(f)\cup T)=\mu'(f).$ This implies that $\mu'(w)=\{f\}\subseteq S$ and, thus $\mu'(S)\subseteq S.$

Second, we show that $\mu'(f)\succeq_f \mu(f)$ for $f\in S$. By definition of $\mu'$, $C_f(\mu(f)\cup\mu'(f))=C_f(\mu(f)\cup C_f(\mu(f)\cup T))$. By Path Independence   $C_f(\mu(f)\cup C_f(\mu(f)\cup T))= C_f(\mu(f)\cup\mu(f)\cup T)=C_f(\mu(f)\cup T).$ By definition of $\mu'$ again,  $C_f(\mu(f)\cup T)=\mu'(f).$ Hence,  $C_f(\mu(f)\cup\mu'(f))=\mu'(f)$ implying that $\mu'(f)\succeq_f \mu(f)$. 

Finally, we show that $\mu'(w)\geq_w \mu(w)$ for each $w\in S.$ Let $w\in S$. Thus, $w\in C_f(\mu(f)\cup T).$ Then,  $w\in\mu(f)$ or $w\in T.$ If $w\in \mu(f)$, $\{f\}=\mu(w)=\mu'(w).$ If $w\in T$,  since $T\subseteq W^\mu_f$ then $\mu'(w)=\{f\}>_w\mu(w).$
Therefore, $\mu'$ dominates $\mu$ via $S$, proving the claim.

Since $\mu'$ dominates $\mu$ via $S$ and $\mu\in \mathcal{C^{QF}}$,  $\mu(f)\subseteq \mu'(f).$ By definition of $\mu'$, 
$\mu(f)\subseteq C_f(\mu(f) \cup T)$. This last fact together with the individual rationality of $\mu$ imply that $\mu\in \mathcal{QF}.$ Therefore,   $ \mathcal{I}\cap \mathcal{C}^{QF}\subseteq\mathcal{QF}$.

  \item[$\boldsymbol{\supseteq)}$] Assume that $\mu\in \mathcal{I} \setminus  \mathcal{C}^\mathcal{QF}$  and we will prove that $\mu \notin \mathcal{QF}.$ Then, there are a matching $\mu'$, a coalition $S$, and a firm $f\in S$ such that $\mu'$ dominates $\mu$ via $S$, and $\mu(f)\nsubseteq \mu'(f)$. We claim that $\mu'(f)\neq\emptyset.$ Assume, otherwise, i.e. $\mu'(f)= \emptyset.$ Since $\mu'$ dominates $\mu$ via $S$ and the individual rationality of $\mu$, $\emptyset=\mu'(f)=C_f(\mu(f)\cup \mu'(f))=C_f(\mu(f))=\mu(f).$ Then, $\mu(f)=\emptyset$ contradicting that $\mu(f)\nsubseteq \mu'(f).$ Therefore, $\mu'(f)\neq\emptyset,$ proving the claim. 
  
  Now, we claim that $\mu'(f)\setminus \mu(f) \neq \emptyset$. Otherwise, since $\mu'(f)\neq\emptyset$, we have that  $\mu'(f)\subseteq \mu(f).$ Moreover, by the individual rationality of $\mu$, we have that $C_f(\mu(f)\cup\mu'(f))=C_f(\mu(f))=\mu(f),$ contradicting that $\mu'(f) \succeq_f \mu(f)$ for $f\in S$. Thus, $\mu'(f)\setminus \mu(f) \neq \emptyset$, proving the claim.
  \medskip
  
  \noindent \textbf{Claim: $\boldsymbol{\mu'(f)\setminus \mu(f) \subseteq W_f^{\mu}}.$} To see this, let $w'\in \mu'(f)\setminus \mu(f).$  
  Since $\mu'(S)\subseteq S$ and $f\in S$, $w' \in S.$ Then, $\{f\}=\mu'(w')\geq_{w'}\mu(w').$ Since  $w'\in \mu'(f)\setminus \mu(f)$, we have that  $\{f\} >_{w'}\mu(w')$, that in turns implies that $w'\in W_f^{\mu}.$ Hence, $\mu'(f)\setminus \mu(f) \subseteq W_f^{\mu}$ proving  the claim. \medskip
  
  Since $\mu(f)\nsubseteq \mu'(f)$, there is $w\in \mu(f)\setminus\mu'(f).$ Note that $w\notin \mu'(f)$ and $\mu'(f) \succeq_f \mu(f)$, we have that $w\notin \mu'(f)=C_{f}(\mu(f)\cup \mu'(f)).$ By subtitutability and the fact that $\mu'(f)\setminus \mu(f) \subseteq W_f^{\mu}$, $w\notin C_{f}(\mu(f)\cup W_f^\mu).$ This last fact together with $w\in \mu(f)\setminus\mu'(f)$ imply that $\mu(f)\nsubseteq C_{f}(\mu(f)\cup W_f^\mu).$  Then, $\mu \notin \mathcal{QF}.$ Therefore,   $\mathcal{QF}\subseteq \mathcal{I}\cap \mathcal{C}^{QF}$.
  \end{itemize} 
  
 By the double inclusion we have $\mathcal{I}\cap \mathcal{C}^{QF}=\mathcal{QF}$.
\end{proof}

The relation among the individuality rational matching set, the firm-quasi-core set, the firm-quasi-stable matching set, and the pairwise stable matching set is depicted in Figure \ref{Diagrama many to one firm quasi}. Note that, the inclusions are similar to the worker-quasi-stability case.

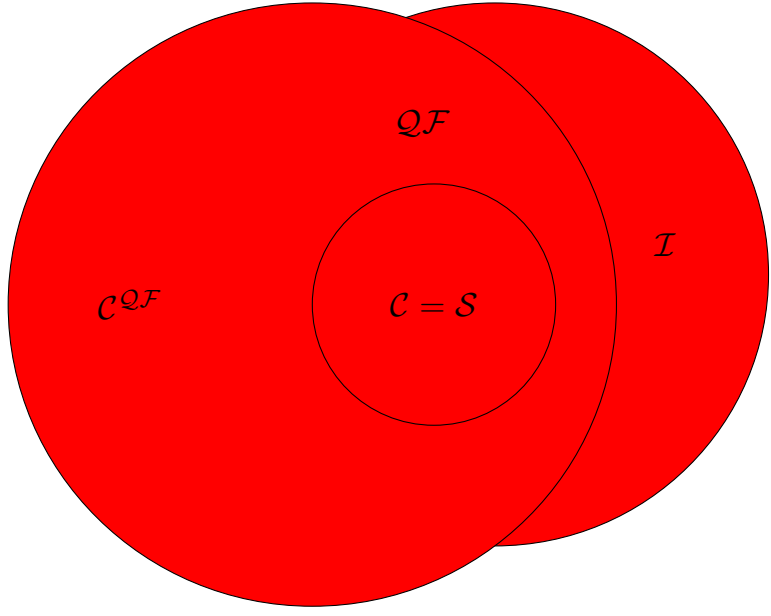
\begin{figure}[ht!]
    \centering
    \begin{tikzpicture}[scale=0.8]
    \fill[red, opacity=0.2] (2, 0.5) circle (4.5cm);
    \draw (2, 0.5) circle (4.5cm);
    \node at (4.8, 1) {$\mathcal{I}$};

     \fill[red, opacity=0.2] (-1, 0) circle (5cm);
    \draw (-1, 0) circle (5cm);
    \node at (-4, 0) {$\mathcal{C^{QF}}$};
    \node at (0.8, 3) {$\mathcal{QF}$};

   \fill[red, opacity=0.2] (1, 0) circle (2cm);
    \draw (1, 0) circle (2cm);
    \node at (1,0) {$\mathcal{C} = \mathcal{S} $};
\end{tikzpicture}
    \caption{Relation among firm-quasi stability notions in many-to-one markets}
    \label{Diagrama many to one firm quasi}
\end{figure}

\section{Quasi-stability notions in a many-to-many model}\label{sección many-to-many}

In this section, we study the relationships between different solution concepts in many-to-many markets. Specifically, we examine the connections between the quasi-core, the set of worker-quasi-stability, the setwise-stable set, the worker-quasi-stable and firm-quasi-stable sets, the pairwise stable set, and the core.

This section is divided into three subsections. In Subsection \ref{subseccion de quasi-core m-m}, we show the relationship between worker-quasi-stable matchings and individually rational matchings that belong to the worker-quasi-core. In Subsection \ref{subseccion de setwise en m-m}, we establish the connection between worker-quasi-stable matchings and setwise-stable matchings. Finally, in Subsection \ref{subseccion de core y quasi estabilidad en m-m}, we analyze the relation between worker-quasi-stable set, firm-quasi-stable set, pairwise stable matching set, and the core.

Given that the notion of worker-quasi-stability is used throughout the three subsections, we present it below.

Taking into account the notion of firm-quasi-stability introduced by \cite{cantala2004restabilizing}, which we also formally present in Definition \ref{def firm-quasi many-to-one}, the concept of worker-quasi-stability is easily adapted to the many-to-many setting. Blocking pairs are permitted as long as they do not compromise the existing relations of firms in the matching.

For $\mu \in \widetilde{\mathcal{M}}$ and $f \in F$, let $$\widetilde{F}_w^\mu=\{f \in F : w \in C_f(\mu(f)\cup\{w\})\}.$$

\begin{definition}\label{def worker-quasi many-to-many}
   A matching $\mu$ is \textbf{worker-quasi-stable} if it is individually rational, and for each $w \in W$ and each $K \subseteq \widetilde{F}_w^\mu$ we have $$\mu(w) \subseteq C_w(\mu(w)\cup K).$$ 
\end{definition}
Note that, in a worker-quasi-stable matching \( \mu \), if a worker is employed by another firm that wishes to hire her, she will not resign from any job under \( \mu \).
Denote by $\boldsymbol{\widetilde{\mathcal{QW}}}$ the set of all worker-quasi-stable matchings for market $(C_F,C_W)$. Since the empty matching belongs to this set, $\widetilde{\mathcal{QW}}\neq \emptyset$.
 Given the symmetry of the market, the notion of firm-quasi-stability is straightforward. So, denote by $\boldsymbol{\widetilde{\mathcal{QF}}}$ the set of all firm-quasi-stable matchings for market $(C_F,C_W)$.

\subsection{Quasi-corewise stability}\label{subseccion de quasi-core m-m}

In this subsection, we present a weakening of the corewise stability notion for many-to-many markets. Although this new notion allows for certain dominances, it requires that workers in the deviating coalition maintain the firms with which they are already employed, while potentially working for new firms. This notion extends to the many-to-many setting the worker-quasi-corewise notion presented in Subsection \ref{subseccion worker quasi en M-1}. 
We also study whether Theorem \ref{teorema characterization de quasi stable en m-1} holds in many-to-many settings.

Now, we formally define when a matching is in the worker-quasi-core.

\begin{definition}
    A matching $\mu$ is in the \textbf{worker-quasi-core} if there are a matching $\mu'$ and a coalition $S$ such that $\mu'$ dominates $\mu$ via $S$, it must satisfy that $\mu(w)\subseteq\mu'(w)$ for each $w\in S$.\footnote{Observe that this definition coincides with Definition \ref{defino worker-quasi-core}, since if a worker in coalition $S$ in the many-to-one market must continue maintaining the firms they are already matched with, and potentially incorporate new ones, the only option is that in matching $\mu$ they were previously unemployed.} 
\end{definition}
Let $\mathcal{C}^{\widetilde{\mathcal{QW}}}$ denote the worker-quasi-core. 
Unfortunately, a complete characterization like the one in the Theorem \ref{teorema characterization de quasi stable en m-1} cannot be obtained in the many-to-many setting. The following theorem states that the individually rational worker-quasi-core does not coincide with the worker-quasi-stable matching set; only one inclusion holds.

\begin{theorem}\label{quasi incluido en el cor IR}
 Let $(C_F,C_W)$ be a many-to-many market, then $\widetilde{\mathcal{QW}}\subseteq\widetilde{\mathcal{I}}\cap \mathcal{C}^{\widetilde{\mathcal{QW}}}.$
\end{theorem}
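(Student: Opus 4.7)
The plan is to fix an arbitrary $\mu\in\widetilde{\mathcal{QW}}$ and verify the two memberships separately. Individual rationality of $\mu$ is built into the definition of worker-quasi-stability, so the only content is showing $\mu\in\mathcal{C}^{\widetilde{\mathcal{QW}}}$. To that end I will suppose, toward verifying the defining implication of the worker-quasi-core, that some matching $\mu'$ dominates $\mu$ via a coalition $S$, and I will show $\mu(w)\subseteq\mu'(w)$ for each $w\in S$.

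The key step will be to link the domination data to the set $\widetilde{F}_w^\mu$ that appears in the definition of worker-quasi-stability. Concretely, I would fix $w\in S$ and prove the auxiliary claim that $\mu'(w)\subseteq\widetilde{F}_w^\mu$. For any $f\in\mu'(w)$ we have $w\in\mu'(f)$, and since $\mu'(S)\subseteq S$ and $w\in S$, the set $\mu'(w)$ lies inside $S$, so $f\in S$. Domination then gives $\mu'(f)\succeq_f\mu(f)$, i.e.\ $\mu'(f)=C_f(\mu(f)\cup\mu'(f))$, hence $w\in C_f(\mu(f)\cup\mu'(f))$. Applying substitutability of $C_f$ to the chain $\mu(f)\cup\{w\}\subseteq\mu(f)\cup\mu'(f)$ yields $w\in C_f(\mu(f)\cup\{w\})$, which is exactly $f\in\widetilde{F}_w^\mu$.

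With the claim in hand, the finish is immediate: taking $K=\mu'(w)\subseteq\widetilde{F}_w^\mu$ in Definition~\ref{def worker-quasi many-to-many} gives
\[
\mu(w)\subseteq C_w(\mu(w)\cup\mu'(w)),
\]
and the domination condition $\mu'(w)\succeq_w\mu(w)$ means $C_w(\mu(w)\cup\mu'(w))=\mu'(w)$, so $\mu(w)\subseteq\mu'(w)$, as required.

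I do not expect a serious obstacle here: the proof is essentially a bookkeeping exercise transporting the choice-function inequality $\mu'(f)\succeq_f\mu(f)$ down to the singleton-enlargement $\mu(f)\cup\{w\}$ via substitutability, and then feeding the resulting membership in $\widetilde{F}_w^\mu$ into the worker-quasi-stability inequality. The one subtle point worth highlighting when writing it out is that the inclusion $\mu'(w)\subseteq S$ needed in the claim relies on $\mu'$ being a matching together with $\mu'(S)\subseteq S$, and that the substitutability step uses $w\in\mu'(f)\subseteq\mu(f)\cup\mu'(f)$ so that $w$ is indeed in the smaller set. The reason this only delivers one inclusion (and not the converse, unlike Theorem~\ref{teorema characterization de quasi stable en m-1}) is that in the many-to-many model a blocking configuration for $\mu$ can no longer be directly encoded as a dominating pair $(f,C_f(\mu(f)\cup\{w\}))$ as in the proof of the many-to-one characterization, because workers already matched to several firms cannot be freely reassigned while keeping $\mu'(w)\succeq_w\mu(w)$.
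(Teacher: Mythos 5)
Your argument is correct and is essentially the paper's proof run in the direct rather than contrapositive direction: the identical key step is showing via substitutability that every $f\in\mu'(w)$ (the paper: every $f\in\mu'(w)\setminus\mu(w)$) satisfies $w\in C_f(\mu(f)\cup\{w\})$, i.e.\ lies in $\widetilde{F}_w^\mu$, and then feeding this into the worker-quasi-stability inequality together with $\mu'(w)=C_w(\mu(w)\cup\mu'(w))$. Your choice $K=\mu'(w)$ instead of the paper's $K=\widetilde{F}_w^\mu$ lets you skip the paper's preliminary non-emptiness claims, but the substance is the same.
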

\begin{proof}
  Assume that $\mu\in \widetilde{\mathcal{I}} \setminus  \mathcal{C}^{\widetilde{\mathcal{QW}}}$ and we will prove that $\mu \notin \widetilde{\mathcal{QW}}.$ Then, there are a matching $\mu'$, a coalition $S$, and a worker $w\in S$ such that $\mu'$ dominates $\mu$ via $S$, and $\mu(w)\nsubseteq \mu'(w)$. We claim that  $\mu'(w)\neq \emptyset$. Assume otherwise, i.e. $\mu'(w)=\emptyset$. Since $\mu'$ dominates $\mu$ via $S$ and the individual rationality of $\mu$, $\emptyset =\mu'(w)=C_w(\mu(w)\cup\mu'(w))=C_w(\mu(w))=\mu(w).$ Then $\mu(w)=\emptyset$ contradicting that $\mu(w)\nsubseteq \mu'(w)$. Therefore  $\mu'(w)\neq \emptyset$, proving the claim. 
  
  Now, we claim that  $\mu'(w)\setminus \mu(w) \neq \emptyset$. Otherwise, since $\mu'(w)\neq \emptyset$, we have that  $\mu'(w)\subseteq \mu(w)$. Moreover, by the individual rationality of $\mu$, we have that $C_w(\mu(w)\cup\mu'(w))=C_w(\mu(w))=\mu(w).$ This contradicts that $\mu'(w) \succeq_w \mu(w)$ for $w\in S$. Thus, $\mu'(w)\setminus \mu(w) \neq \emptyset$, proving the claim. 

  \medskip
\noindent \textbf{Claim: $\boldsymbol{\mu'(w)\setminus \mu(w) \subseteq \widetilde{F}_w^{\mu}.}$}
 To see this, let $f'\in \mu'(w)\setminus \mu(w).$  
    Then, $w\in\mu'(f')\setminus \mu(f').$ Since  $f'\in S$ and $\mu'$ dominates $\mu$ via $S$,  $w\in \mu'(f')=C_{f'}(\mu'(f')\cup \mu(f')).$ By substitutability and the fact that  $w\in\mu'(f')\setminus \mu(f')$, we have that $w\in C_{f'}(\mu(f')\cup \{w\}).$ This implies that $f'\in \widetilde{F}_w^{\mu}.$ Hence, $\mu'(w)\setminus \mu(w) \subseteq \widetilde{F}_w^{\mu}$ proving  the claim. \medskip

  Since $\mu(w)\nsubseteq \mu'(w),$  there is $f\in \mu(w)\setminus\mu'(w).$ Note that $f\notin \mu'(w)$ and $\mu'(w) \succeq_w \mu(w)$ imply that $f\notin \mu'(w)=C_{w}(\mu(w)\cup \mu'(w)).$ By substitutability and the fact that $\mu'(w)\setminus \mu(w) \subseteq \widetilde{F}_w^{\mu}$, $f\notin C_{w}(\mu(w)\cup \widetilde{F}_w^\mu).$ This last fact together with $f\in \mu(w)\setminus\mu'(w)$ imply that $\mu(w)\nsubseteq C_{w}(\mu(w)\cup \widetilde{F}_w^\mu).$  Then, $\mu \notin \widetilde{\mathcal{QW}}.$ Therefore,   $\widetilde{\mathcal{QW}}\subseteq\widetilde{\mathcal{I}}\cap \mathcal{C}^{\widetilde{\mathcal{QW}}}$.
 \end{proof}

As a by-product of the previous theorem and the fact that $\widetilde{\mathcal{QW}}\neq \emptyset$, we have that $\mathcal{C}^{\widetilde{\mathcal{QW}}}\neq \emptyset.$

In what follows, we show that an individually rational matching that is in the worker-quasi-core, may not be worker-quasi-stable, i.e. in this case, the inclusion of Theorem \ref{quasi incluido en el cor IR} is strict. To do this, consider the preference profile  taken from Example 6.9 of \cite{roth1992two}: 
\begin{center}\noindent\begin{tabular}{l}
$ >_{f_1}: \{w_1w_2\}, \{w_2w_3\},\{w_1\},\{w_2\},\{w_3\},\emptyset  $\\
$ >_{f_2}: \{w_2w_3\}, \{w_1w_3\},\{w_2\},\{w_1\},\{w_3\},\emptyset$\\
$ >_{f_3}:\{w_1w_3\}, \{w_1w_2\},\{w_3\},\{w_1\},\{w_2\},\emptyset $\\
\end{tabular}~~~~~
\noindent\begin{tabular}{l}
$ >_{w_1}:\{f_1f_2\},\{f_2f_3\},\{f_1\},\{f_2\},\{f_3\} ,\emptyset$\\
$ >_{w_2}: \{f_2f_3\},\{f_1f_3\},\{f_2\},\{f_1\},\{f_3\},\emptyset $\\
$ >_{w_3}:\{f_1f_3\},\{f_1f_2\},\{f_3\},\{f_1\},\{f_2\},\emptyset $\\

\end{tabular}\medskip
\end{center} Note that each agent's preference can induce a choice function. For example, consider $>_{w_1}$, the choice function is induced as follows: $C_{w_1}(\{f_1,f_2,f_3\})=\{f_1,f_2\}$, $C_{w_1}(\{f_1,f_2\})=\{f_1,f_2\}$, $C_{w_1}(\{f_2,f_3\})=\{f_2,f_3\}$, $C_{w_1}(\{f_1,f_3\})=\{f_1\}$, $C_{w_1}(\{f_1\})=\{f_1\}$, $C_{w_1}(\{f_2\})=\{f_2\}$, $C_{w_1}(\{f_3\})=\{f_3\}$, $C_{w_1}(\emptyset)=\emptyset$.
Now, we consider the following example with the induced choice functions $C_F$ and $C_W.$

\begin{example}\label{ejemplo m-t-m cuasicore}
 Let $(C_F,C_W)$ be a many-to-many market, where $F = \{f_1,f_2,f_3\}$, $W = \{w_1,w_2,w_3\}$, and consider choices functions $C_F$ and $C_W$ induced by the preference profile presented above. Furthermore, consider the following matching:
$$
\mu=\begin{pmatrix}
w_1 & w_2 &w_3 \\
f_2f_3 &f_1f_3& f_1f_2  \\
\end{pmatrix}
.~$$
It can be checked that $\mu\in \widetilde{\mathcal{I}}$, and $\mu$ is undominated. Then $\mu\in \widetilde{\mathcal{C}}$ and , therefore, $\mu\in \mathcal{C}^{\widetilde{\mathcal{QW}}}.$ Note that, if we consider $w_1$ and take $K=\{f_1\}$, we have that $C_{w_1}(\mu(w_1)\cup K)=C_{w_1}(\{f_2,f_3\}\cup \{f_1\})=\{f_1,f_2\}$. Thus, $\mu(w_1)\nsubseteq C_{w_1}(\mu(w_1)\cup K)$, implying that $\mu \notin \widetilde{\mathcal{QW}}$ and, therefore, $\mu\notin \widetilde{S}.$
Therefore, $\widetilde{\mathcal{QW}} \subsetneq  \widetilde{\mathcal{I}}   \cap \mathcal{C}^{\widetilde{\mathcal{QW}}}  .$
\hfill $\Diamond$
\end{example}

\subsection{Quasi-setwise stability}\label{subseccion de setwise en m-m}
In this subsection, we present a weakening of the notion of setwise stability. We say that a matching is in the \textbf{setwise stable set} if it is individually rational and not dominated by any other matching via some coalition that forms new links between its members while preserving the links of agents outside the coalition. We also characterize the set of matching that satisfies our weaker notion of setwise stability with the worker-quasi-stable matching set.
Before presenting such a weaker notion of setwise stability, we need to introduce a special domination between matchings.
\begin{definition}\label{Defbloqueosetwise}
    Let $\mu,\mu'$ be two different matchings, and let $S\subseteq F\cup W$ be a non-empty coalition. We say that  \textbf{$\boldsymbol{\mu'}$ setwise dominates $\boldsymbol{\mu}$ via}  $\boldsymbol{S}$ if   $\mu'(S)\setminus\mu(S)\subseteq S$, $\mu'(a)\succeq_a\mu(a)$ for each $a\in S$, and there is $a'\in S$ such that $\mu'(a')\succ_{a'}\mu(a')$. for some $a'\in S.$ 
    \end{definition}
    According to this definition, a matching belongs to the setwise set if it is individually rational and not setwise dominated. We denote the set of setwise stable matchings by $\boldsymbol{\mathcal{SW}}$. Note that, based on the definitions of domination and setwise dominance (Definitions \ref{defino dominancia} and \ref{Defbloqueosetwise}, respectively), the set of setwise stable matchings is a subset of the core, i.e.  $\mathcal{SW} \subseteq \mathcal{\widetilde{C}}.$

    For weakening the notion of setwise stability, we allow some setwise dominations but impose that the workers in the coalition through which the setwise domination occurs do not cease to work for the firms they were already working for. Formally,
\begin{definition}\label{Defworker-quasi-setwise-stability}
   A matching $\mu$ is in the \textbf{ worker-quasi-setwise stable set} if it is individually rational and there are a matching  $\mu'$ and a
coalition $S$ such that $\mu'$ setwise dominates $\mu$ via $S$, it must satisfy that $\mu(w)\subseteq \mu'(w)$ for each $w \in S$.
\end{definition}

Let $\boldsymbol{ \mathcal{SW^{\widetilde{QW}}} }$ denote the worker-quasi-setwise stable set for market $(C_F,C_W)$. 
The following theorem characterizes the set of worker-quasi-stable matchings through the set of worker-quasi-setwise stable matchings. This statement is proved by double inclusion. 

\begin{theorem}
  Let $(C_F,C_W)$ be a many-to-many market, then   $\mathcal{SW^{\widetilde{QW}}}=\widetilde{\mathcal{QW}}.$
\end{theorem}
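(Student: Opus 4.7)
The plan is to prove the equality by double inclusion, paralleling the structure of Theorem 1.

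For the inclusion $\mathcal{SW^{\widetilde{QW}}}\subseteq \widetilde{\mathcal{QW}}$, I proceed by contrapositive. Take $\mu\in\widetilde{\mathcal{I}}$ that fails worker-quasi-stability, so there exist $w\in W$ and $K\subseteq\widetilde{F}_w^\mu$ with $\mu(w)\nsubseteq C_w(\mu(w)\cup K)$. I construct a witnessing setwise dominator by setting $\mu'(w)=C_w(\mu(w)\cup K)$; for each $f\in\mu'(w)\setminus\mu(w)$ defining $\mu'(f)=C_f(\mu(f)\cup\{w\})$ (this contains $w$ because $f\in K\subseteq \widetilde{F}_w^\mu$); keeping $\mu'(f)=\mu(f)$ for $f\in\mu'(w)\cap\mu(w)$; setting $\mu'(f)=\mu(f)\setminus\{w\}$ for $f\in\mu(w)\setminus\mu'(w)$; and $\mu'(f)=\mu(f)$ otherwise. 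The worker side of $\mu'$ is then pinned down by matching consistency. With coalition $S=\{w\}\cup\mu'(w)$, path independence yields $\mu'(w)\succeq_w\mu(w)$ (strict because $\mu(w)\nsubseteq\mu'(w)$) and $\mu'(f)\succeq_f\mu(f)$ for each firm newly added to $w$'s set, while the individual rationality of $\mu$ handles the remaining coalition firms, and substitutability confines the new partners inside $\mu'$ to $S$. This exhibits a bad setwise domination with $\mu(w)\nsubseteq\mu'(w)$, contradicting $\mu\in\mathcal{SW^{\widetilde{QW}}}$.

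For the reverse inclusion $\widetilde{\mathcal{QW}}\subseteq\mathcal{SW^{\widetilde{QW}}}$, suppose for contradiction that $\mu\in\widetilde{\mathcal{I}}\setminus\mathcal{SW^{\widetilde{QW}}}$, so some matching $\mu'$ setwise dominates $\mu$ via a coalition $S$ with $\mu(w)\nsubseteq\mu'(w)$ for some $w\in S$. Let $K=\mu'(w)\setminus\mu(w)$. The setwise dominance condition forces each $f\in K$, as a new partner of $w\in S$, to lie in $S$. Then $\mu'(f)\succeq_f\mu(f)$ reads $\mu'(f)=C_f(\mu(f)\cup\mu'(f))$, and since $w\in\mu'(f)$ the alternative form of substitutability (the footnote reformulation of substitutability) yields $w\in C_f(\mu(f)\cup\{w\})$, so $f\in\widetilde{F}_w^\mu$. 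Hence $K\subseteq\widetilde{F}_w^\mu$. Worker-quasi-stability then gives $\mu(w)\subseteq C_w(\mu(w)\cup K)=C_w(\mu(w)\cup\mu'(w))=\mu'(w)$, where the last equality uses $\mu'(w)\succeq_w\mu(w)$ together with path independence. This contradicts $\mu(w)\nsubseteq\mu'(w)$.

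The main obstacle is the first inclusion: the construction of $\mu'$ must simultaneously respect the matching property (in particular accommodating workers displaced when some $f$ re-optimizes its roster via $C_f(\mu(f)\cup\{w\})$) and deliver the preference comparisons $\mu'(a)\succeq_a\mu(a)$ for all $a\in S$; substitutability is what keeps the set of new partners small enough to be absorbed into $S$. The second inclusion is lighter and essentially re-uses the argument from the proof of Theorem \ref{quasi incluido en el cor IR}, with the setwise dominance clause playing the role there of the plain dominance clause and guaranteeing that the new firms at $w$ lie within the coalition.
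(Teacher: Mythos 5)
Your proof is correct and follows essentially the same route as the paper's: the same contrapositive construction of a witnessing setwise dominator with $\mu'(w)=C_w(\mu(w)\cup K)$ and $\mu'(f)=C_f(\mu(f)\cup\{w\})$ for the newly added firms in the forward direction, and the same substitutability argument showing $\mu'(w)\setminus\mu(w)\subseteq\widetilde{F}_w^{\mu}$ in the reverse direction. Your only departures are cosmetic: you are more explicit than the paper about extending $\mu'$ to a genuine matching (handling displaced workers), and you use the slightly larger but equivalent coalition $\{w\}\cup\mu'(w)$.
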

\begin{proof} We will prove the double inclusion in two cases.
\begin{itemize} 
    \item[$\boldsymbol{\subseteq )}$]  Assume that  $\mu\notin \widetilde{\mathcal{QW}}$, we will prove that $\mu\notin \mathcal{SW^{\widetilde{QW}}}.$ If $\mu \notin \widetilde{\mathcal{I}}$, then $\mu\notin \mathcal{SW^{\widetilde{QW}}}$, and we are done. So, take $\mu\in \widetilde{\mathcal{I}} \setminus \widetilde{\mathcal{QW}}$. Thus, there are $w\in W$ and $K\subseteq \widetilde{F}^\mu_w$ such that 
    \begin{equation}\label{noquasiworker}
        \mu(w)\nsubseteq C_w(\mu(w)\cup K).
    \end{equation}
    
     We consider $S=\{w,C_w(\mu(w)\cup K)\setminus \mu(w)\}.$ Let $\mu'$ be a matching such that $\mu'(w)=C_w(\mu(w)\cup K)$ and $\mu'(f)=C_f(\mu(f)\cup \{w\})$ for each $f\in S$.
      We claim that $\mu'$ setwise dominates $\mu$ via $S.$ To see this, we first show that $\mu'(S)\setminus \mu(S)\subseteq S.$  Note that, by definition of $\mu'$, $\mu'(w)\setminus \mu(w)=C_w(\mu(w)\cup K)\setminus \mu(w).$ So,  $\mu'(w)\setminus \mu(w)\subseteq S.$ Let $f\in S$. By definition of $\mu'$, $\mu'(f)\setminus \mu(f)=C_f(\mu(f)\cup \{w\})\setminus \mu(f).$  Since $f\in S,$ $f\in C_w(\mu(w) \cup K)\setminus \mu(w)$. Then, $f\in K.$ Note that, that $K\subseteq \widetilde{F}^\mu_w.$ By definition of $\widetilde{F}^\mu_w$, we have that $C_f(\mu(f)\cup \{w\})\setminus \mu(f)=\{w\}.$ Hence, $\mu'(f)\setminus \mu(f)=\{w\}.$ Therefore,   $\mu'(S)\setminus \mu(S)\subseteq S.$ 

Now, we show that $\mu'(a)\succeq_a \mu(a)$ for each $a\in S$. Let $w\in S$. By definition of $\mu'$, $C_w(\mu(w)\cup\mu'(w))=C_w(\mu(w)\cup C_f(\mu(w)\cup K))$. By Path Independence   $C_w(\mu(w)\cup C_w(\mu(w)\cup K))= C_w(\mu(w)\cup\mu(w)\cup K)=C_w(\mu(w)\cup K).$ By definition of $\mu'$ again,  $C_w(\mu(w)\cup K)=\mu'(w).$ Hence,  $C_w(\mu(w)\cup\mu'(w))=\mu'(w)$ implying that $\mu'(w)\succeq_w \mu(w)$. Let $f\in S$. Thus, $f\in C_w(\mu(w)\cup K)$ and $f\notin \mu(w).$ By definition of $\mu'$, $C_f(\mu(f)\cup\mu'(f))=C_f(\mu(f)\cup C_f(\mu(f)\cup\{w\})).$ By Path Independence and definition of $\mu'$, $C_f(\mu(f)\cup C_f(\mu(f)\cup\{w\}))= C_f(\mu(f)\cup\{w\})=\mu'(f).$ Hence,  $C_f(\mu(f)\cup\mu'(f))=\mu'(f)$ implying that $\mu'(f)\succeq_f \mu(f)$. Then, $\mu'$ setwise dominates $\mu$ via $S.$ Using $\eqref{noquasiworker}$ and definition of $\mu'$ we have that, $\mu(w)\nsubseteq \mu'(w).$ Therefore, $\mu\notin \mathcal{SW^{\widetilde{QW}}}.$
    
   \item[$\boldsymbol{\supseteq)}$] Assume that $\mu\notin \mathcal{SW^{\widetilde{QW}}}$ and we will prove that $\mu\notin \widetilde{\mathcal{QW}}.$  If $\mu\notin \mathcal{\widetilde{I}}$, then $\mu\notin \mathcal{\widetilde{QW}}$ and we are done. So, take $\mu\in \mathcal{\widetilde{I}}\setminus \mathcal{SW^{\widetilde{QW}}}.$
   Then, there are a matching $\mu'$, a coalition $S$, and a worker $w\in S$ such that $\mu'$ setwise dominates $\mu$ via $S$, and $\mu(w)\nsubseteq \mu'(w)$.  
   
   We claim that  $\mu'(w)\neq \emptyset$. Assume otherwise, i.e. $\mu'(w)=\emptyset$. Since $\mu'$ dominates $\mu$ via $S$ and the individual rationality of $\mu$, $\emptyset =\mu'(w)=C_w(\mu(w)\cup\mu'(w))=C_w(\mu(w))=\mu(w).$ Then $\mu(w)=\emptyset$ contradicting that $\mu(w)\nsubseteq \mu'(w)$. Therefore  $\mu'(w)\neq \emptyset$, proving the claim. 
  
  Now, we claim that  $\mu'(w)\setminus \mu(w) \neq \emptyset$. Otherwise, since $\mu'(w)\neq \emptyset$, we have that  $\mu'(w)\subseteq \mu(w)$. Moreover, by the individual rationality of $\mu$, we have that $C_w(\mu(w)\cup\mu'(w))=C_w(\mu(w))=\mu(w).$ This contradicts that $\mu'(w) \succeq_w \mu(w)$ for $w\in S$. Thus, $\mu'(w)\setminus \mu(w) \neq \emptyset$, proving the claim.
\medskip

\noindent \textbf{Claim: $\boldsymbol{\mu'(w)\setminus \mu(w) \subseteq \widetilde{F}_w^{\mu}.}$} 
 To see this, let $f'\in \mu'(w)\setminus \mu(w).$  
  Then, $w\in\mu'(f')\setminus \mu(f').$ Since  $f'\in S$ and $\mu'$ setwise dominates $\mu$ via $S$,  $w\in \mu'(f')=C_{f'}(\mu'(f')\cup \mu(f')).$ By substitutability and the fact that  $w\in\mu'(f')\setminus \mu(f')$, we have that $w\in C_{f'}(\mu(f')\cup \{w\}).$ This implies that $f'\in \widetilde{F}_w^{\mu}.$ Hence, $\mu'(w)\setminus \mu(w) \subseteq \widetilde{F}_w^{\mu}$ proving  the claim. \medskip
  
  Note that $\mu(w)\nsubseteq \mu'(w)$ implies that there is $f\in \mu(w)\setminus\mu'(w).$ Since $f\notin \mu'(w)$ and $\mu'(w) \succeq_w \mu(w)$, we have that $f\notin \mu'(w)=C_{w}(\mu(w)\cup \mu'(w)).$ By substitutability and the claim, $f\notin C_{w}(\mu(w)\cup \widetilde{F}_w^\mu).$ This fact together with $f\in \mu(w)\setminus\mu'(w)$ imply that $\mu(w)\nsubseteq C_{w}(\mu(w)\cup \widetilde{F}_w^\mu).$  Then, $\mu \notin \widetilde{\mathcal{QW}}.$ Therefore,   $\widetilde{\mathcal{QW}}\subseteq\widetilde{\mathcal{I}}\cap \mathcal{C}^{\widetilde{\mathcal{QW}}}$.
  \end{itemize} 
  
 By the double inclusion we have $\mathcal{SW^{\widetilde{QW}}}=\widetilde{\mathcal{QW}}.$
\end{proof}

As a by-product of the previous characterization and the fact that $\widetilde{\mathcal{QW}}\neq \emptyset$, we have that $\mathcal{SW^{\widetilde{QW}}}\neq \emptyset.$

The dual definition of firm-quasi-setwise stability can be obtained by requiring that firms in the deviating coalition do not fire any of the workers instead of imposing conditions on workers of the deviating coalition. This is, we now require that $\mu(f)\subseteq \mu'(f)$ for each $f \in S$ instead of $\mu(w)\subseteq \mu'(w)$ for each $w \in S$ in Definition \ref{Defworker-quasi-setwise-stability}. So, due to the symmetry of the market, the results of this section can be stated for firms.
Figure \ref{Diagrama many to many} depicts the relation between the notions of setwise stability, core, worker-quiasi-core, worker-quasi stability and  worker-quasi-setwise stability.

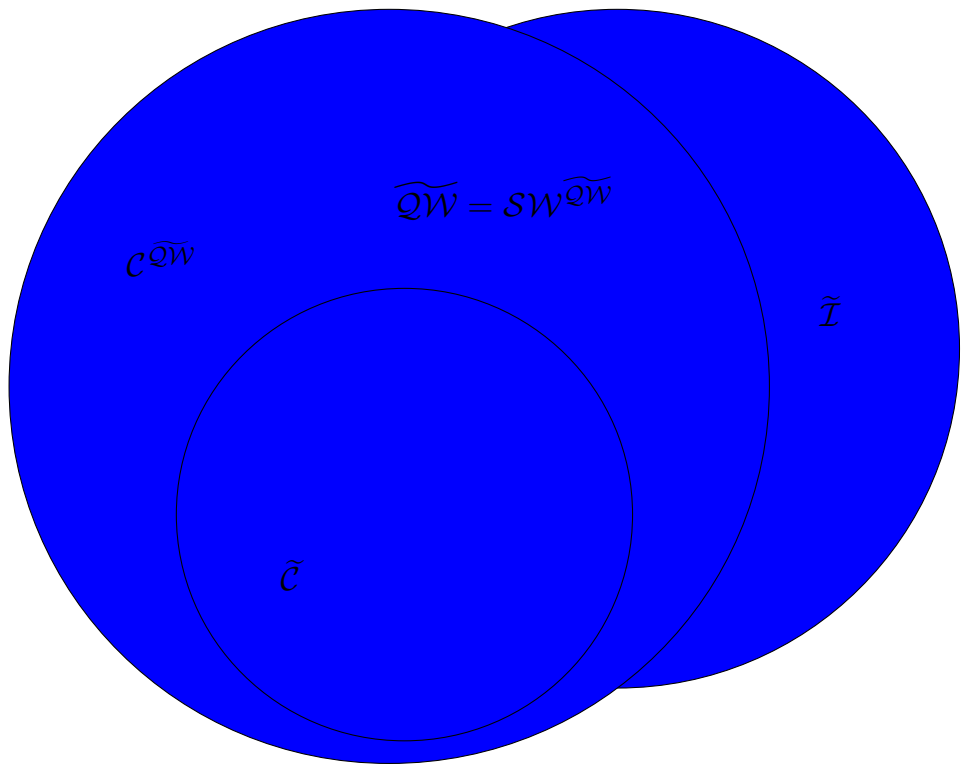
\begin{figure}[ht!]
    \centering
    \begin{tikzpicture}
    \fill[blue, opacity=0.1] (2, 0.5) circle (4.5cm);
    \draw (2, 0.5) circle (4.5cm);
    \node at (4.8, 1) {$\widetilde{\mathcal{I}}$};

     \fill[blue, opacity=0.15] (-1, 0) circle (5cm);
    \draw (-1, 0) circle (5cm);
    \node at (-4, 1.7) {$\mathcal{C^{\widetilde{QW}}}$};
    \node at (0.5, 2.5) {$\widetilde{\mathcal{QW}}=\mathcal{SW}^{\widetilde{{\mathcal{QW}}}}$};


      \fill[blue, opacity=0.2] (0, -1.2) circle (1.2cm);
    \draw (0, -1.2) circle (1.2cm);
    \node at (0, -1.2) {$\mathcal{SW} $};

   \fill[blue, opacity=0.2] (-0.8, -1.7) circle (3cm);
    \draw (-0.8, -1.7) circle (3cm);
    \node at (-2.3, -2.5) {$\widetilde{\mathcal{C}} $};
\end{tikzpicture}
    \caption{Relations in many-to-many markets}
    \label{Diagrama many to many}
\end{figure}

\subsection{Pairwise stability, Corewise stability and Quasi-stability}\label{subseccion de core y quasi estabilidad en m-m}
In this final subsection, we study the relationship between the worker-quasi-stable set, the firm-quasi-stable set, the pairwise stable set, and the core. Note that, unlike the many-to-one model, where it is known that the core coincides with the set of pairwise stable matchings, in the many-to-many model there may be unstable matchings in the core, as observed in Example \ref{ejemplo m-t-m cuasicore}. Additionally, the core matching $\mu$ is not worker-quasi-stable and, therefore, neither pairwise stable. Moreover, it is easily observed that $\mu$ is nether firm-quasi-stable. The following example presents a market where a core matching can be worker-quasi-stable but not firm-quasi-stable (thus not pairwise stable). Consider a modification of the preferences from Example 6.9, taken from \cite{roth1992two}:

\begin{center}
\begin{tabular}{l}
$ >_{f_1}: \{w_1w_2\}, \{w_2w_3\},\{w_1\},\{w_2\},\{w_3\},\emptyset  $\\
$ >_{f_2}: \{w_2w_3\}, \{w_1w_3\},\{w_2\},\{w_1\},\{w_3\},\emptyset$\\
$ >_{f_3}:\{w_1w_2w_3\},\{w_1w_3\},\{w_1w_2\},\{w_2w_3\},\{w_3\},\{w_1\},\{w_2\},\emptyset $\\
\\
$ >_{w_1}:\{f_1f_2f_3\},\{f_1f_2\},\{f_2f_3\},\{f_1f_3\},\{f_1\},\{f_2\},\{f_3\},\emptyset $\\
$ >_{w_2}: \{f_1f_2f_3\},\{f_1f_3\},\{f_1f_2\},\{f_2f_3\},\{f_2\},\{f_1\},\{f_3\},\emptyset $\\
$ >_{w_3}:\{f_1f_2f_3\},\{f_1f_3\},\{f_1f_2\},\{f_2f_3\},\{f_3\},\{f_1\},\{f_2\} ,\emptyset$\\
\end{tabular}
\end{center}\medskip
Note that, as in Example \ref{ejemplo m-t-m cuasicore}, each agent induces a choice function based on their preferences. Thus, 
 the market $(C_F,C_W)$ of the following example is well defined.
\begin{example}
 Let $(C_F,C_W)$ be a market with $F = \{f_1,f_2,f_3\}$ and $W = \{w_1,w_2,w_3\}$.
 Consider the following matching:
$$
\mu=\begin{pmatrix}
w_1 & w_2 &w_3 \\
f_2f_3 &f_1f_3& f_1f_2f_3  \\
\end{pmatrix}
.~$$
It can be checked that $\mu \in \widetilde{\mathcal{I}}$, and $\mu$ is undominated. Therefore, $\mu \in \widetilde{\mathcal{C}}$. Note that $\mu \notin \widetilde{\mathcal{S}}$, as $(f_1, w_1)$ is a blocking pair. Moreover, since $(f_1, w_1)$ is the only blocking pair, $\widetilde{F}^{\mu}_{w_1} = \{f_1\}$ and $C_{w_1}(\mu(w_1) \cup \widetilde{F}^{\mu}_{w_1}) = C_{w_1}(\{f_2, f_3\} \cup \{f_1\}) = \{f_1, f_2, f_3\}$. Thus, $\mu(w_1) \subseteq C_{w_1}(\mu(w_1) \cup \widetilde{F}^{\mu}_{w_1})$, implying that $\mu \in \widetilde{\mathcal{QW}}$. 
\hfill $\Diamond$
\end{example}
Due to the symmetry of the market, this same example, by exchanging the roles of firms and workers, shows that there is a core matching that is firm-quasi-stable but not worker-quasi-stable, and therefore not pairwise stable.

In the following theorem, we find a relation between all of these notions. To do this, we now restrict ourselves to matchings that are both worker-quasi-stable and firm-quasi-stable, but not pairwise stable, and analyze their relationship with the core. Next, we show that these matchings are always dominated, i.e. they are not in the core.

\begin{theorem}\label{theorem core and quasi-stability}
Let $(C_F,C_W)$ be a many-to-many market, if a matching is both worker-quasi-stable and firm-quasi-stable, then the matching is either pairwise stable or is dominated by another matching via some coalition, i.e. $ \widetilde{\mathcal{QW}}\cap \widetilde{\mathcal{QF}}\subseteq \widetilde{\mathcal{S}}\cup\widetilde{\mathcal{C}}^c.$\footnote{$\widetilde{\mathcal{C}}^c$ denotes the complement of the core, i.e. the set of dominated matchings.}
\end{theorem}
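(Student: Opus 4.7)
The plan is to prove the contrapositive: assuming $\mu\in \widetilde{\mathcal{QW}}\cap \widetilde{\mathcal{QF}}$ but $\mu\notin \widetilde{\mathcal{S}}$, I will exhibit a matching $\mu'$ and a coalition $S$ through which $\mu'$ dominates $\mu$, placing $\mu\in \widetilde{\mathcal{C}}^c$. Since $\mu\in \widetilde{\mathcal{QW}}$ is individually rational but not pairwise stable, there is a blocking pair $(f,w)$, so $w\notin \mu(f)$, $w\in C_f(\mu(f)\cup\{w\})$, and $f\in C_w(\mu(w)\cup\{f\})$.

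The heart of the argument is to upgrade the blocking-pair inclusions to equalities by using each of the two quasi-stability hypotheses on the singleton $\{f\}$ and $\{w\}$ respectively. From $w\in C_f(\mu(f)\cup\{w\})$ we get $f\in \widetilde{F}^\mu_w$, so worker-quasi-stability applied with $K=\{f\}$ yields $\mu(w)\subseteq C_w(\mu(w)\cup\{f\})$. Together with $f\in C_w(\mu(w)\cup\{f\})$ this gives $\mu(w)\cup\{f\}\subseteq C_w(\mu(w)\cup\{f\})$, and the reverse inclusion is automatic, so
\[
C_w(\mu(w)\cup\{f\})=\mu(w)\cup\{f\}.
\]
Symmetrically, since $f\in C_w(\mu(w)\cup\{f\})$ places $w\in \widetilde{W}^\mu_f$, firm-quasi-stability (in its many-to-many form, obtained by the symmetry alluded to right after the definition of $\widetilde{\mathcal{QF}}$) with $T=\{w\}$ gives
\[
C_f(\mu(f)\cup\{w\})=\mu(f)\cup\{w\}.
\]

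Now define $\mu'$ by $\mu'(f)=\mu(f)\cup\{w\}$, $\mu'(w)=\mu(w)\cup\{f\}$, and $\mu'(a)=\mu(a)$ otherwise. A short case check (splitting according to whether the firm in a pair is $f$ or the worker is $w$) verifies that $\mu'$ satisfies the symmetry condition and is a legitimate matching. Take the coalition $S=F\cup W$, so $\mu'(S)\subseteq S$ holds trivially. For any $a\notin\{f,w\}$, individual rationality of $\mu$ (from $\mu\in \widetilde{\mathcal{QW}}$) yields $C_a(\mu(a)\cup\mu'(a))=C_a(\mu(a))=\mu(a)=\mu'(a)$, so $\mu'(a)\succeq_a\mu(a)$. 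For $a\in\{f,w\}$, the two displayed equalities above show $C_a(\mu(a)\cup\mu'(a))=\mu'(a)$, giving $\mu'(a)\succeq_a\mu(a)$, and since $w\notin\mu(f)$ but $w\in\mu'(f)$ the improvement is strict at $a=f$ (and also at $a=w$). Hence $\mu'$ dominates $\mu$ via $S$, so $\mu\in \widetilde{\mathcal{C}}^c$, and the theorem follows.

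The main obstacle is essentially the leap from blocking-pair inclusions to the equalities $C_f(\mu(f)\cup\{w\})=\mu(f)\cup\{w\}$ and $C_w(\mu(w)\cup\{f\})=\mu(w)\cup\{f\}$; once these are in hand, the construction of $\mu'$ and the domination via $S=F\cup W$ proceed routinely, since enlarging the coalition to all of $F\cup W$ avoids any bookkeeping about spillover partners outside $\{f,w\}\cup\mu(f)\cup\mu(w)$.
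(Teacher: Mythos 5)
Your proposal is correct and follows essentially the same route as the paper's proof: take a blocking pair $(f,w)$, use both quasi-stability hypotheses to upgrade the blocking inclusions to the equalities $C_f(\mu(f)\cup\{w\})=\mu(f)\cup\{w\}$ and $C_w(\mu(w)\cup\{f\})=\mu(w)\cup\{f\}$, then show the matching obtained by adding the single link $(f,w)$ dominates $\mu$ via the grand coalition $F\cup W$. Your explicit verification of the strict improvement at $f$ is a small point the paper leaves implicit, but the argument is otherwise identical.
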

\begin{proof}
    Assume that $\mu\in \widetilde{\mathcal{QW}}\cap \widetilde{\mathcal{QF}}\setminus\widetilde{\mathcal{S}} $ and we will prove that $\mu \in \widetilde{\mathcal{C}}^c.$ Then, there is a blocking pair $(f,w)$ of $\mu$. Thus, $f\in C_w(\mu(w)\cup\{f\})$ and $w\in C_f(\mu(f)\cup\{w\})$. 
    Let $$\widetilde{W}_f^\mu=\{w \in W : f \in C_w(\mu(w)\cup\{f\})\}.$$
    Thus, by definition of $\widetilde{F}^{\mu}_w$ and $\widetilde{W}^{\mu}_f$, we have 
    \begin{equation}\label{ecu 1 pureba theorem core and quasi-stability}
        \{f\}\subseteq \widetilde{F}^{\mu}_w\text{ and }\{w\}\subseteq \widetilde{W}^{\mu}_f.
    \end{equation}
    Recall that by definition of $\widetilde{\mathcal{QW}}$ and $\widetilde{\mathcal{QF}}$, we have that 
    \begin{equation}\label{ecu 2 pureba theorem core and quasi-stability}
        \mu(w) \subseteq C_w(\mu(w)\cup \{f\})\text{ and }\mu(f) \subseteq C_f(\mu(f)\cup \{w\}).
    \end{equation}  
    Since, by hypothesis $\mu\in \widetilde{\mathcal{QW}}\cap \widetilde{\mathcal{QF}}$, and the fact that $(f,w)$ is a blocking pair of $\mu$, together with \eqref{ecu 1 pureba theorem core and quasi-stability} and \eqref{ecu 2 pureba theorem core and quasi-stability} imply that  
    \begin{equation}\label{ecu 3 pureba theorem core and quasi-stability}
        \mu(w)\cup \{f\}= C_w(\mu(w)\cup \{f\})\text{ and }\mu(f)\cup \{w\}=C_f(\mu(f)\cup \{w\}).
    \end{equation} 
    Define a matching $\mu'$ such that $$\mu'(a)=\begin{cases}
        \mu(a)\cup\{f\}& \text{if } a=w,\\
        \mu(a)\cup\{w\}& \text{if } a=f,\\
        \mu(a)& \text{otherwise}.\\
            \end{cases}$$
 Note that the definition of $\mu'$ and \eqref{ecu 3 pureba theorem core and quasi-stability} imply that $\mu'$ is individually rational. Now, we claim that $\mu'$ dominates $\mu$ via the coalition $F\cup W$, and thus $\mu\notin \widetilde{\mathcal{C}}.$
  First, by definition of $\mu'$, $\mu'(F\cup W)\subseteq F\cup W.$ Second, to prove $\mu'(a)\succeq_a \mu(a)$ for each $a\in F\cup W.$ Consider $a\notin \{f,w\}.$ By definition of $\mu'$ and the individual rationality of $\mu'$, we have $C_a(\mu'(a)\cup \mu(a))=C_a(\mu'(a)\cup \mu'(a))=C_a(\mu'(a))=\mu'(a).$ Then, $\mu'(a)\succeq_a \mu(a)$ for $a\notin \{f,w\}.$
  Now, consider the firm $f$ of the blocking pair.  By definition of $\mu'$ and \eqref{ecu 3 pureba theorem core and quasi-stability}, $C_f(\mu'(f)\cup \mu(f))=C_f(\mu(f)\cup\{w\} \cup \mu(f))=C_f(\mu(f)\cup\{w\})=\mu(f)\cup\{w\}=\mu'(f).$ Then, $\mu'(f)\succeq_f \mu(f).$ By the same reasoning, we can show that $\mu'(w)\succeq_w \mu(w)$ for $w$ of the blocking pair. Hence, $\mu'(a)\succeq_a \mu(a)$ for each $a\in F\cup W$. Then, $\mu'$ dominates $\mu$ via the coalition $F\cup W$, and the claim is proven. Therefore, $\mu \in \widetilde{\mathcal{S}}\cup\widetilde{\mathcal{C}}^c.$  
  \end{proof}
  
Figure \ref{Diagrama many to many bis} depicts the relation between the notions of worker-quasi-stability, firm-quasi-stability, pairwise stability, and corewise stability.

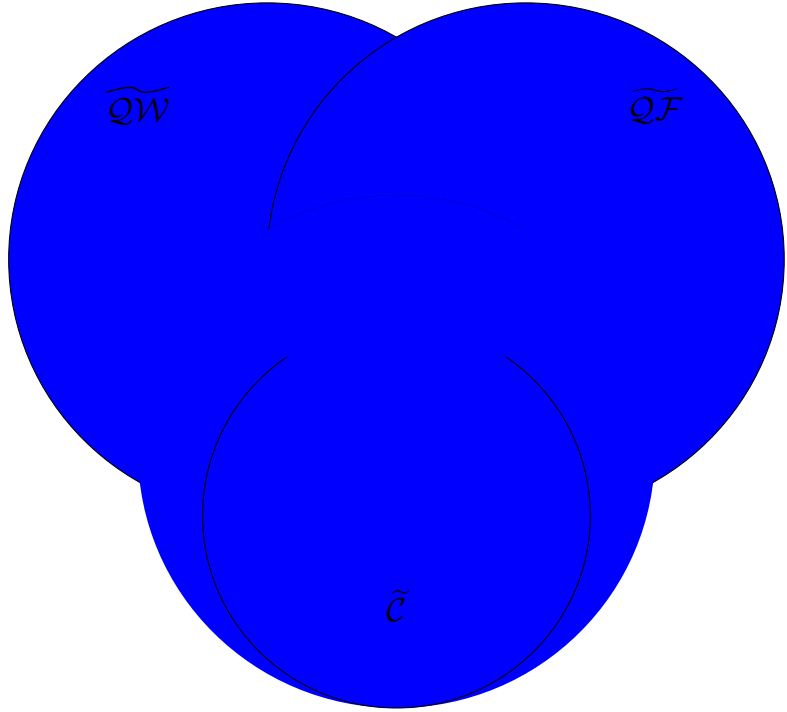
\begin{figure}[ht!]
    \centering
\begin{tikzpicture}[scale =1.7]
    \fill[blue, opacity=0.2] (0, 0) circle (2cm);
    \draw (0, 0) circle (2cm);
     \node at (-1, 1.2) {$\widetilde{\mathcal{QW}}$};

    \fill[blue, opacity=0.2] (2, 0) circle (2cm);
    \draw (2, 0) circle (2cm);
     \node at (3, 1.2) {$\widetilde{\mathcal{QF}}$};

    \fill[blue, opacity=0] (1, -1.5) circle (2cm); 
\node at (1,0) {$\widetilde{\mathcal{S}}  $};

    \fill[blue, opacity=0.2] (1, -2) circle (1.5cm);
    \draw (1, -2) circle (1.5cm);
\node at (1, -2.7) {$\widetilde{\mathcal{C}} $};
    \begin{scope}
        \clip (0, 0) circle (2cm); 
        \clip (2, 0) circle (2cm); 
        \clip (1, -1.5) circle (2cm); 
        \draw  (1, -1.5) circle (2cm);
        \fill[blue, opacity=0.2] (-3, -3) rectangle (5, 3); 
    \end{scope}

\end{tikzpicture}
 \caption{Relations in many-to-many markets}
    \label{Diagrama many to many bis}
\end{figure}

\section{Concluding Remarks}\label{Concluding Remarks}

The main contribution of this paper is to examine weakenings of the concepts of corewise stability and setwise stability in matching markets, following the idea of weakening pairwise stability as known in (pairwise) quasi-stability. The concept of quasi-stability has been extensively studied in the literature; to the best of our knowledge, this is the first paper to investigate weakenings of corewise and setwise stability.

For a many-to-one market, we show that the set of individually rational matchings in the worker-quasi-core is characterized by the set of worker-quasi-stable matchings. Despite the market’s asymmetry, we also find that the set of individually rational matchings in the firm-quasi-core corresponds to the set of firm-quasi-stable matchings.

For a many-to-many market, we demonstrate that each worker-quasi-stable matching is individually rational and belongs to the worker-quasi-core. We also show that the set of worker-quasi-setwise stable matchings corresponds to the set of worker-quasi-stable matchings As a final result, we show that a matching that is both worker-quasi-stable and firm-quasi-stable is either pairwise stable or dominated by another matching through certain deviating coalitions.

In this article, we analyze a many-to-many market and establish that every worker-quasi-stable matching is individually rational and resides within the worker-quasi-core. We further demonstrate that the set of worker-quasi-setwise stable matchings aligns with the set of worker-quasi-stable matchings. Lastly, we prove that any matching satisfying both worker-quasi-stability and firm-quasi-stability is either pairwise stable or is dominated by an alternative matching through specific coalition deviations.

Many papers that examine the notion of quasi-stability find that this set has a lattice structure \citep[see][among others]{wu2018lattice,bonifacio2022lattice,bonifacio2024envyfreelattice,bonifacio2024lattice,yang2024existence}. Based on our characterizations in many-to-one markets, we conclude that both the set of individually rational matchings within the worker-quasi-core and the set of individually rational matchings within the firm-quasi-core exhibit lattice structure. Furthermore, in many-to-many markets, the worker-quasi-setwise stable set also possesses a lattice structure. Thus, due to the symmetry of the many-to-many market, the firm-quasi-setwise stable set also possesses a lattice structure.

Consider a many-to-one market where firms have responsive preferences, a preference structure that is more restrictive than substitutable preferences, and consider allowing agents to have indifferences in their preferences. In this setting, \cite{bonifacio2024core} investigate three notions of the core (core, strong core, and super core) and their relationships with the known (pairwise) stability notions (stability, strong stability, and super stability) introduced in \cite{irving1994stable}. A potential line of inquiry could involve adapting the (pairwise) quasi-stability notions for the three (pairwise) stability notions introduced in \cite{irving1994stable} and exploring their relationships with the quasi-corewise stability concepts adapted to the three core notions presented in \cite{bonifacio2024core}.

\end{document}